\documentclass{article}

\newcommand{\full}[1]{#1} 

\newcommand{\conf}[1]{}   

\usepackage{graphicx,amssymb,amsmath}
\usepackage{url,float}
\usepackage{amsfonts}
\usepackage{latexsym}
\usepackage[boxruled]{algorithm2e}
\usepackage{tcolorbox} 
\usepackage{enumerate} 

\usepackage{color}

\newcommand{\remove}[1]{{}}

\usepackage[normalem]{ulem}

\newcommand{\ABox}{
\raisebox{3pt}{\framebox[6pt]{\rule{6pt}{0pt}}}
}
\full{\newenvironment{proof}{{\bf Proof:}}{\hfill\ABox}}

\full{
\newtheorem{theorem}{{\bf Theorem}}

\newtheorem{lemma}{Lemma}
}


\newcommand{\lemlab}[1]{\label{lemma:#1}}
\newcommand{\thmlab}[1]{\label{thm:#1}}
\newcommand{\figlab}[1]{\label{fig:#1}}
\newcommand{\seclab}[1]{\label{sec:#1}}

\newcommand{\lemref}[1]{\ref{lemma:#1}}

\newcommand{\secref}[1]{\ref{sec:#1}}
\newcommand{\figref}[1]{\ref{fig:#1}}



\def\d{{\delta}}
\def\e{{\varepsilon}}

\def\q{{\theta}}
\def\f{{\phi}}

\def\bH{{\partial H}}

\def\R{{\mathbb{R}}}

\newcommand\butf{\operatorname{bf}}

\newcommand{\squeezelist}{\setlength{\itemsep}{0pt}}

%


\title{Threadable Curves}

\author{
Joseph O'Rourke and Emmely Rogers%
    \thanks{Department of Computer Science, Smith College, Northampton, MA, USA.
      \protect\url{{jorourke,erogers}@smith.edu}.}
      }

\index{Rogers, Emmely}
\index{O'Rourke, Joseph}


\begin{document}
\maketitle

\begin{abstract}
We define a plane curve to be \emph{threadable} if it can rigidly pass through a point-hole in 
a line $L$ without otherwise touching $L$.
Threadable curves are in a sense generalizations of monotone curves.
We have two main results.
The first is a linear-time algorithm for deciding whether a polygonal
curve is threadable---$O(n)$ for a curve of $n$ vertices---and if threadable, 
finding a sequence of rigid motions to thread it through a hole.
We also sketch an argument that shows that the threadability of algebraic curves
can be decided in time polynomial in the degree of the curve.
The second main result is an $O(n \operatorname{polylog} n)$-time
algorithm for deciding whether a 3D polygonal curve can thread through 
hole in a plane in $\mathbb{R}^3$,
and if so, providing a description of the rigid motions that achieve the threading.
\end{abstract}

\section{Introduction}
\seclab{Introduction}
We define a simple (non-self-intersecting) open planar curve $C$ to be \emph{threadable} 
if there exists a continuous sequence of rigid motions that allows
$C$ to pass through a point-hole $o$ in an infinite line $L$ without any other point of $C$
ever touching $L$. For fixed $L$, we will take $L$ to be the $x$-axis and $o$ to be the origin; equivalently we can view $C$ as fixed and $L$ moving (Lemma~\lemref{thread2line}).
$C$ could be a polygonal chain or a smooth curve.
$C$ is open in the sense that it is not closed to a cycle.
An example is shown in Fig.~\figref{line_35_62};
animations are available at
\url{http://cs.smith.edu/~jorourke/Threadable/}.
\begin{figure}[htbp]
\centering
\includegraphics[width=0.90\linewidth]{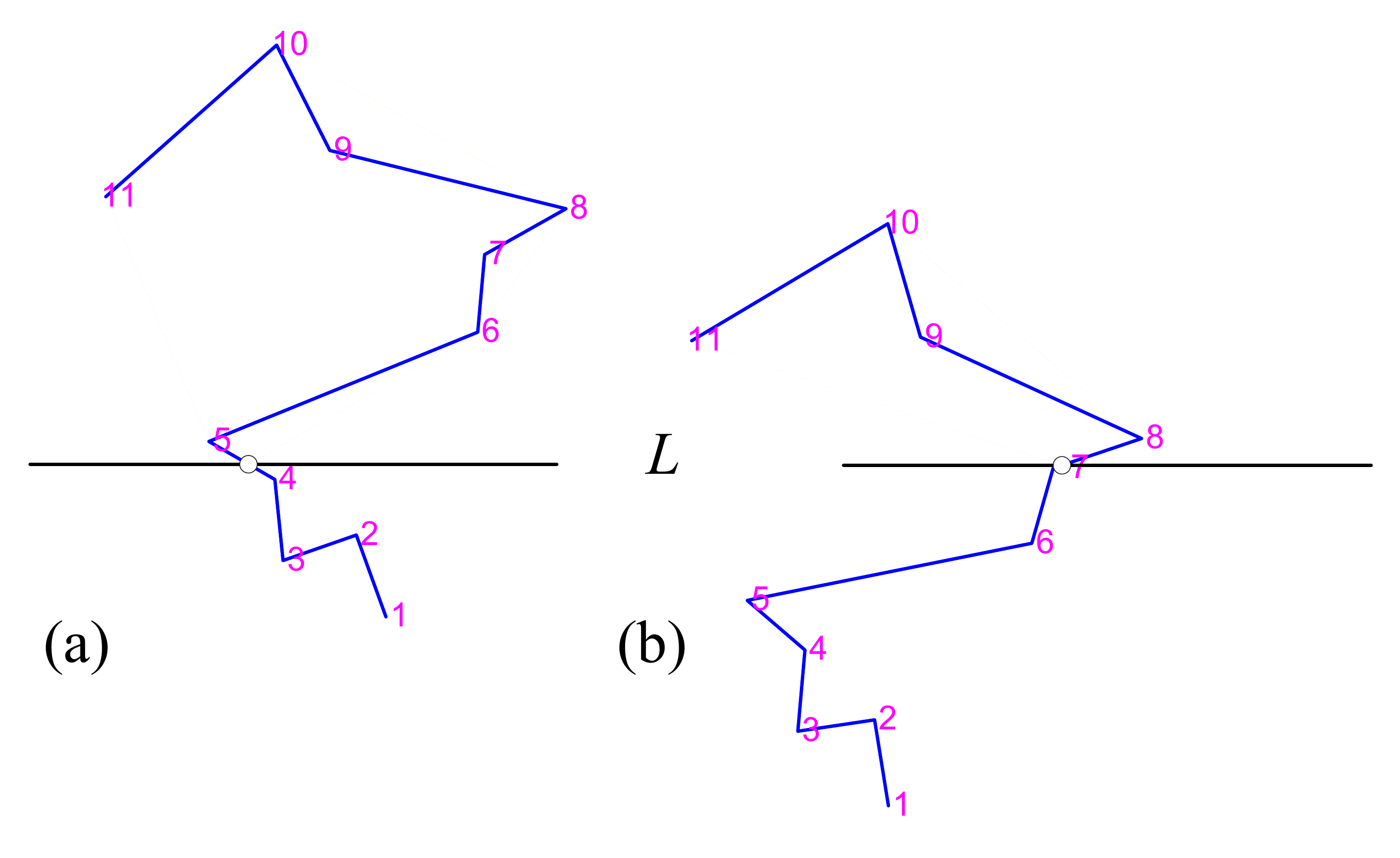}
\caption{Two snapshots of a $10$-segment polygonal chain passing through a point-hole
in the $x$-axis.}
\figlab{line_35_62}
\end{figure}

Note that our definition requires ``strict threadability'' in the sense that
no other point of $C$ touches $L$. So,  for example,
the curve illustrated in  
Fig.~\figref{ThreadableStrictDef} is not threadable.
\begin{figure}[htbp]
\centering
\full{\includegraphics[width=0.6\linewidth]{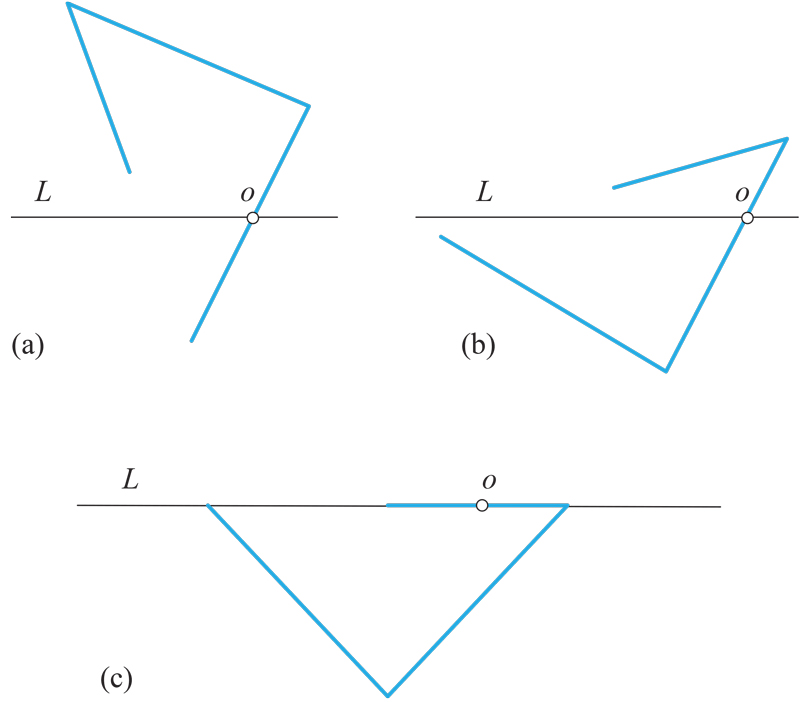}}
\conf{\includegraphics[width=0.8\linewidth]{Figures/ThreadableStrictDef}}
\caption{(a,b)~A curve that is not threadable: two snapshots partially through $o$.
(c)~To pass completely through $o$, 
an edge would have to lie on $L$.}
\figlab{ThreadableStrictDef}
\end{figure}

This notion has appeared in the literature in another guise.\footnote{
We thank Anna Lubiw for this reference.}
In particular, a threadable curve $C$ corresponds to a
``generalized self-approaching curve'' with width $\pi$ in both directions,
as defined in~\cite{aaiklr-gsac-01}.
However, those authors do not explore that concept,\footnote{
``One might also consider a symmetric situation, where curves are $\f$-self-approaching in both directions. Generalizations to 3D are also completely open.''}
and in any case, our explorations focus on different properties of $C$.

One could view our topic as a specialized motion-planning problem,
but it seems not directly addressed in the literature.
Work of Yap~\cite{yap1987move}, discussed in Section~\secref{HigherDim},
can be viewed as a higher dimensional version.
The research of
Arkin et al.~\cite{afm-asmpo-02}, examining the fabrication of
hydraulic tubes, leads to a concern for workspace clearance, to
which we return in Section~\secref{Open}.
We will see that classical computational geometry tools
suffice to address our problems, but some interesting questions are raised.

\subsection{Definition Consequences}
\seclab{DefinitionConsequences}
We now explore a few consequences of the definition.

\begin{lemma}
If a curve $C$ is threadable, then through every point $p \in C$
there is a line $L$ that meets $C$ in exactly $p$: $L \cap C = \{ p\}$,
and $L$ properly crosses $C$ at $p$.
\lemlab{thread2line}
\end{lemma}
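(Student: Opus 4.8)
The plan is to catch the instant of the threading motion at which $p$ sits in the hole. Place $L$ along the $x$-axis with $o$ at the origin, and let $M_t$, $t\in[0,1]$, be the continuous family of rigid motions witnessing threadability; reorienting if necessary, $M_0(C)$ lies strictly below $L$, $M_1(C)$ strictly above it, and $M_t(C)\cap L\subseteq\{o\}$ for every $t$. For $p\in C$ the second coordinate $y_p(t)$ of $M_t(p)$ is continuous, negative at $t=0$, positive at $t=1$, and can vanish only where $M_t(p)=o$ (otherwise $M_t(p)\in C$ would meet $L$ away from the hole). Hence some $t^*$ has $M_{t^*}(p)=o$, so $M_{t^*}(C)\cap L=\{o\}$ and $L_{t^*}:=M_{t^*}^{-1}(L)$ is a line through $p$ with $L_{t^*}\cap C=M_{t^*}^{-1}(\{o\})=\{p\}$. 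This already gives the first assertion; the task is to choose $t^*$ so that $L_{t^*}$ also crosses $C$ properly at $p$.

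Next I would reduce the crossing clause to a statement about the endpoints $a,b$ of $C$. Suppose $p$ is interior and let $C_a,C_b$ be the two components of $C\setminus\{p\}$, containing $a$ and $b$. Whenever $M_t(p)=o$, each of $M_t(C_a),M_t(C_b)$ is connected and disjoint from $L$, hence contained in a single open half-plane, namely the one containing $M_t(a)$, respectively $M_t(b)$ (and $y_a(t),y_b(t)\neq 0$ there, since $M_t$ is injective and $a,b\neq p$). Consequently $L_t$ crosses $C$ properly at $p$ precisely when $a,b$ lie on strictly opposite sides of $L$ at time $t$. So the goal becomes: find $t^*$ with $M_{t^*}(p)=o$ and $y_a(t^*)y_b(t^*)<0$.

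I would obtain such a $t^*$ by a connectedness argument in the $(t,s)$-square. Parametrize $C$ by $\gamma\colon[0,1]\to\mathbb{R}^2$ from $a$ to $b$ with $\gamma(s_p)=p$, and let $Y(t,s)$ be the second coordinate of $M_t(\gamma(s))$. Then $Y(0,\cdot)<0$, $Y(1,\cdot)>0$, and---because $M_t(C)\cap L\subseteq\{o\}$ and $\gamma$ is injective---each vertical slice $Y(t,\cdot)$ has at most one zero. Thus the zero set $\mathcal Z=\{Y=0\}$ in $[0,1]_t\times[0,1]_s$ is the graph of a continuous partial function of $t$; it avoids the edges $t=0$ and $t=1$, and, since $\{t=0\}\subseteq\{Y<0\}$ and $\{t=1\}\subseteq\{Y>0\}$ while $\{Y<0\}$ and $\{Y>0\}$ are disjoint open sets covering the complement of $\mathcal Z$, it must separate those two edges. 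A thin graph over $t$ can do this only by running from the edge $s=0$ to the edge $s=1$ and, along the way, crossing the level $s=s_p$ at a point $(t^*,s_p)$ with $Y<0$ just below and $Y>0$ just above---that is, with $y_a(t^*)$ and $y_b(t^*)$ of opposite sign and $M_{t^*}(p)=o$, as required. Finally, if $p$ is an endpoint, interpret ``properly crosses at $p$'' as transversality of $L_{t^*}$ to the arc $C$ at that endpoint, and secure it by a small perturbation of $t^*$ that makes the tangent of $M_{t^*}(C)$ at $o$ non-horizontal.

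The step I expect to fight hardest is the last topological claim. The curve may repeatedly approach the hole and withdraw, so $\mathcal Z$ (equivalently the set $\{t:M_t(p)=o\}$) can be badly disconnected; one must then argue carefully that the passage of $\mathcal Z$ through the level $s=s_p$ that one extracts is genuinely sign-changing---a proper crossing---rather than a merely tangential touch, and that it really occurs at a time when $p$, and not just some neighbour of $p$, is in the hole. Establishing the analogous transversality at an endpoint has the same flavour. The remaining steps are routine continuity and connectedness bookkeeping.
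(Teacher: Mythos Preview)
Your treatment is far more careful than the paper's own. The paper disposes of this lemma in two sentences: it simply observes that at the instant $p$ sits in the hole, the $x$-axis plays the role of $L$, and declares the rest ``nearly immediate.'' It does not engage with the proper-crossing clause at all. You are right that this is where the content lies: at a moment when $M_t(p)=o$, both arcs of $C\setminus\{p\}$ could in principle lie in the same open half-plane, giving only a tangent line rather than a proper crossing. So you are not so much taking a different route as filling in what the paper leaves implicit.

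Your framework is sound and the IVT step giving \emph{some} $t^*$ with $M_{t^*}(p)=o$ is fine. The flagged topological step, however, is phrased in a way that does not quite hold. The zero set $\mathcal Z$ is homeomorphic, via $t\mapsto(t,s(t))$, to its $t$-projection $D$, and $D$ is merely a closed subset of $(0,1)$; it can be totally disconnected. So $\mathcal Z$ need not ``run'' anywhere as a connected arc, and separation of the two vertical edges does not by itself force a connected traverse from $s=0$ to $s=1$ crossing $s=s_p$ with the right sign pattern.

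A cleaner way to close the gap, within your setup, is the following contradiction. Assume every $t^*\in T:=\{t:M_{t^*}(p)=o\}$ is a tangency, and define $\tau(t)=\operatorname{sign}y_p(t)$ for $t\notin T$ and $\tau(t)=\operatorname{sign}y_a(t)$ for $t\in T$ (well-defined since then $y_a(t)=y_b(t)$ in sign). Local constancy of $\tau$ off $T$ is clear. At $t_0\in T$ with, say, $\tau(t_0)=-1$, continuity gives $y_a(t),y_b(t)<0$ for $t$ near $t_0$; if in addition $y_p(t)>0$ for some such $t$, then \emph{both} arcs $M_t(C_a)$ and $M_t(C_b)$ would cross $L$, forcing two distinct points of $M_t(C)$ to lie at $o$, which violates the threading hypothesis. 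Hence $y_p\le 0$ near $t_0$, so $\tau\equiv -1$ there. Thus $\tau$ is locally constant on $[0,1]$ yet $\tau(0)=-1\neq+1=\tau(1)$, the desired contradiction. This two-crossings obstruction is exactly the missing idea that converts your outline into a proof, and it handles precisely the ``tangential touch'' worry you single out. (Your endpoint paragraph is also shaky---perturbing $t^*$ moves $p$ off the hole---but the paper is equally informal about endpoints, so this is a minor point.)
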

\begin{proof}
This is a nearly immediate consequence of the definition, because at any one
time the $x$-axis serves as $L$, meeting $C$ at $p=o$.
So one can imagine $C$ fixed and $L$ undergoing rigid motions.
\end{proof}

Note that $L$ tangent to $C$ is insufficient for threadability, 
for then $C$ would locally lie on one side of $L$.
This is why the lemma insists on proper crossings.

What is perhaps not immediate is the implication in the other direction
to Lemma~\lemref{thread2line}:
\begin{lemma}
If a curve $C$ has the property that through every point $p \in C$
there is a line $L$ that meets $C$ in exactly $p$,
and $L$ properly crosses $C$ at $p$, then $C$ is threadable.
\lemlab{line2thread}
\end{lemma}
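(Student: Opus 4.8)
The plan is to construct the threading motion explicitly by choosing, at each point $p \in C$, a "crossing line" $L_p$ guaranteed by the hypothesis, and then moving $L$ so that the hole $o$ travels along $C$ while $L$ coincides with $L_p$ when $o$ is at $p$. Parametrize $C$ as $p(t)$, $t \in [0,1]$. At each $t$, the hypothesis gives a line $\ell(t)$ through $p(t)$ meeting $C$ only at $p(t)$ and crossing properly there. If I can choose $\ell(t)$ to depend continuously on $t$, then the rigid motion that at time $t$ places the $x$-axis onto $\ell(t)$ (with the origin at $p(t)$) is a continuous family of rigid motions, and by construction no point of $C$ other than $p(t)$ ever lies on the moving line. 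That would thread the "interior" of $C$; the two endpoints need a short additional motion (slide $C$ off along the last crossing line) to complete the passage, since threading requires $C$ to pass entirely through.

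First I would address whether the crossing lines can be chosen continuously. The set of admissible directions at $p(t)$ — directions $\theta$ such that the line through $p(t)$ in direction $\theta$ meets $C$ only at $p(t)$ and crosses properly — is an open subset of the circle of directions (openness: if a line misses the compact sets $C \cap \{s \le t-\delta\}$ and $C \cap \{s \ge t+\delta\}$ away from $p(t)$ and crosses properly, nearby lines through nearby points do too; one must handle the local picture near $p(t)$ using that the crossing is transverse). The hypothesis says this open set is nonempty for every $t$. A clean way to get a continuous selection is a partition-of-unity / local-constancy argument: cover $[0,1]$ by finitely many intervals on each of which a single fixed direction works for all points in that interval (possible by openness and compactness), then interpolate between the chosen directions on the overlaps — one must check that the interpolated direction stays admissible, which again follows from openness if the overlaps are taken small enough. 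Alternatively, invoke that the space of admissible (point, direction) pairs fibers over $[0,1]$ with open nonempty fibers in $S^1$, and $[0,1]$ is contractible, to get a section.

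The main obstacle I expect is exactly this continuity/selection step, and in particular controlling the behavior \emph{near} the moving point $p(t)$ — the far parts of $C$ are easy (compactness), but near $p(t)$ the line must cross $C$ transversally and locally separate the two arc-ends, and as $t$ varies the local geometry of $C$ (its tangent direction, or the wedge of directions it occupies for a polygonal vertex) changes. For a polygonal curve this is a finite combinatorial matter — the admissible direction set is a finite union of arcs with endpoints varying piecewise-linearly — and continuity of the selection can be arranged by hand. For a smooth curve one should argue the admissible set varies lower-semicontinuously in $t$. I would prove the lemma in enough generality to cover both, perhaps by first reducing (as Lemma~\lemref{thread2line} already hints) to the equivalent "moving line" formulation, then doing the finite-cover argument, and finally appending the endpoint-escape motions to finish the complete passage through $o$.
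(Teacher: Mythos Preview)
Your plan correctly isolates the crux---selecting a crossing line $\ell(t)$ continuously in $t$---but the interpolation step has a real gap. After covering $[0,1]$ by intervals on each of which some fixed direction $\theta_j$ is admissible, you propose to interpolate from $\theta_j$ to $\theta_{j+1}$ on the overlap, asserting this ``follows from openness if the overlaps are taken small enough.'' Openness alone does not give this. On the overlap both $\theta_j$ and $\theta_{j+1}$ are admissible, but the admissible direction set at a point could a priori be disconnected, with $\theta_j$ and $\theta_{j+1}$ in different components; then no interpolating path stays admissible, however short the overlap. Your fibration alternative has the same defect: nonempty open fibers over a contractible base do not in general admit a continuous section without some convexity or connectedness of the fibers.

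The missing ingredient is that the admissible set at each $p$ is an \emph{interval} in $\mathbb{RP}^1$ (an open double wedge of lines through $p$). This is easy once noticed: if $L_1,L_2$ are both admissible at $p$, then $C^+\setminus\{p\}$ and $C^-\setminus\{p\}$ lie in opposite open sectors of $L_1\cup L_2$, so every line through $p$ rotating from $L_1$ to $L_2$ through the other pair of sectors also separates them. The paper makes this structural by introducing the butterfly $\butf(p)$ together with the forward and backward convex hulls $H^+,H^-$: the admissible lines are exactly those supporting both hulls at $p$, hence a wedge bounded by two wing-lines determined by the hull edges incident to $p$. Since $H^\pm$ vary continuously (Hausdorff) as $p$ slides along $C$, so do the wing-lines, and the bisector of $\butf(p)$ is a canonical continuous selection---no cover-and-interpolate needed. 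Adding the interval observation would repair your argument; the paper's hull route reaches the same conclusion more directly and also powers the subsequent algorithm.
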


\noindent
The reason this is not immediate, is that 
it is conceivable that the orientation of the line changes discontinuously at
some point $p \in C$, requiring an instantaneous rigid ``jump'' motion of $C$ to
pass through $L$, rather than a continuous rigid motion.
A proof is deferred until we can rule out this discontinuity (Section~\secref{Hulls}).

\subsection{Monotone Curves}
\seclab{MonotoneCurves}
A \emph{monotone curve} $C$ is defined as one that meets all lines parallel to some
line $L$ in a single point (if \emph{strictly monotone}), or which intersects every line
parallel to $L$ in either a point or a segment (if \emph{non-strictly monotone}).
Every strictly monotone curve is threadable, and one can view threadability
as a generalization of monotonicity, allowing the orientation of $L$ to vary.

\section{Butterflies}
\seclab{Butterflies}
Define the \emph{butterfly} $\butf(p)$ for $p \in C$ to be the set of all lines $L$
satisfying the threadability condition at $p$: those lines
that meet $C$ in exactly $p$ and properly cross $C$ at $p$.
Let $L$ be one line in $\butf(p)$, and view $C$ as passing
through $L$ at $p$.
Then the convex hull $H^+$ of the chain from $p$ upward is
above $L$ and meets $L$ exactly at $p$, and the
hull $H^-$ of the chain from $p$ downward is
below $L$ and again meets $L$ exactly at $p$. 
(Here ``upward'' and ``downward" are not meant literally, but just convenient
shorthand for the two portions of the curve delimited by a roughly
horizontal $L$.)
If either hull met
$L$ in more than just $p$, then strict threadability would be violated at $L$.
Now rotate $L$ counterclockwise about $p$ until it hits $C$ at some point other
than $p$, and similarly clockwise.
The stopping points determine the butterfly \emph{wing-lines}.
See Fig.~\figref{bf_35_62}.
\begin{figure}[htbp]
\centering
\includegraphics[width=0.90\linewidth]{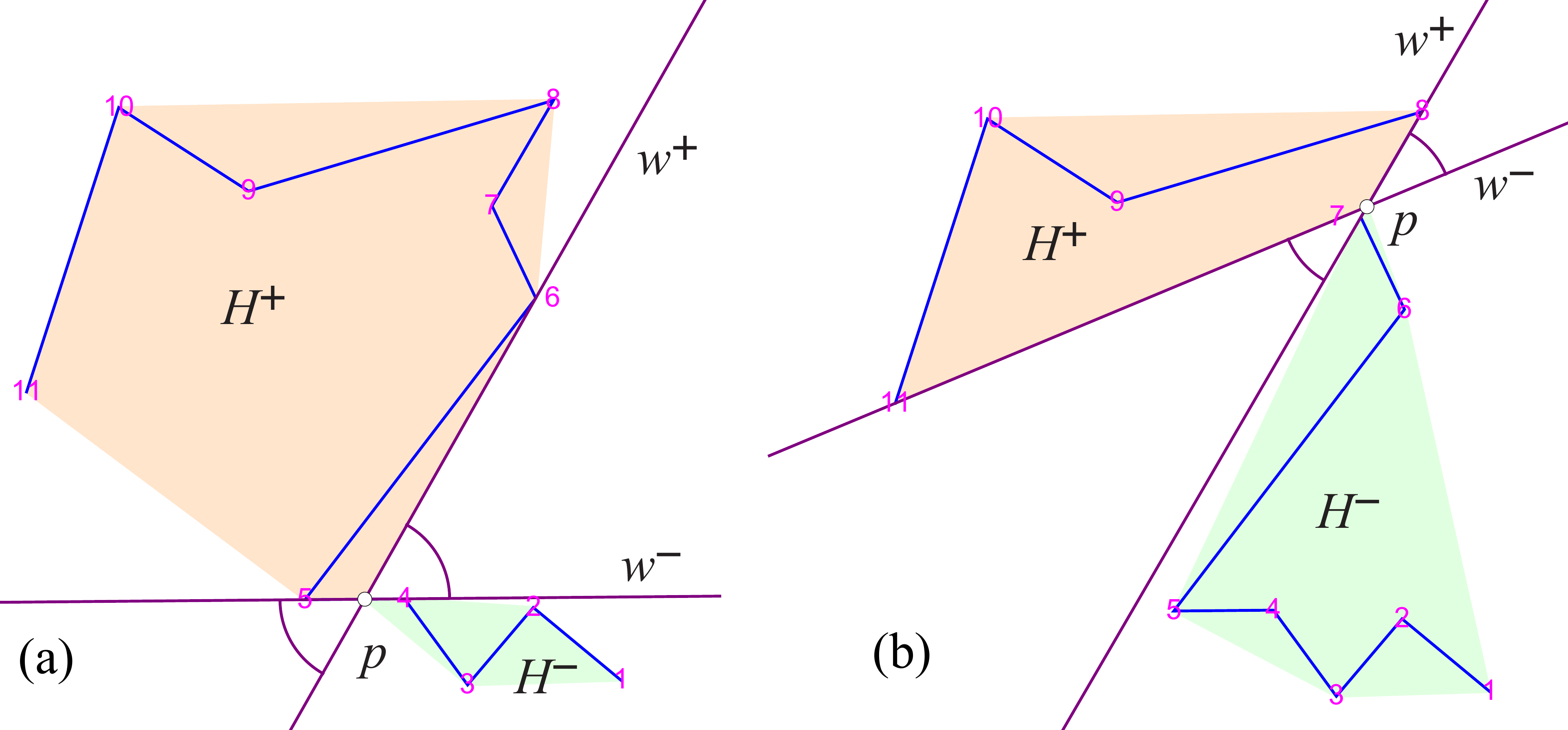}
\caption{Here $C$ is fixed, and two $\butf(p)$'s
are shown. Note the hulls $H^+$ and $H^-$ meet at exactly $p$.
(a)~The stopping point ccw is vertex $6$ and cw it is vertices $4,5$.}
\figlab{bf_35_62}
\end{figure}

Thus
$\butf(p)$ is an open double wedge. Its two boundary wing-lines
$w^+$ and $w^-$ 
(which are not part of $\butf(p)$) 
must both be externally supported by points of $C$ distinct from $p$.
Each wing must touch $C$ on at least one of its two halves with respect to $p$.
Note by our definition, $\butf(p)$ can never be a line; rather it becomes
empty when the wings-lines merge to one line.

\section{Upper and Lower Hulls}
\seclab{Hulls}
It is not difficult to see that the upper convex hull $H^+$ changes continuously
(say, under the Hausdorff distance measure) as $p$ moves along $C$, and similarly for $H^-$.
This has long been known in the work on computing ``kinetic'' convex hulls of continuously
moving points (although we have not found an explicit statement).
Roughly, because each point in the convex hull of a finite set of points is a convex combination 
of those points, moving one point $p$ a small amount $\e$ changes the hull
by at most a small amount $\d$.
For more detail, see~\cite{MSE1}.

Because the hulls change continuously,
the butterflies change continuously as well.
So we have finally established Lemma~\lemref{line2thread}:
If there is a line through every $p \in C$ meeting the threadability criteria,
then indeed $C$ is threadable: there are continuous rigid motions that
move $C$ through a point-hole in a line.

And now this is an immediate consequence of 
Lemma~\lemref{line2thread} and our definition of $\butf(p)$:
\begin{lemma}
A curve $C$ is threadable if and only if 
$\butf(p)$ is never empty for any $p \in C$.
\lemlab{bf-non-empty}
\end{lemma}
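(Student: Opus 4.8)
The plan is to recognize that this lemma is a direct repackaging of Lemmas~\lemref{thread2line} and~\lemref{line2thread} once the definition of $\butf(p)$ is unwound. By definition, $\butf(p)$ is precisely the set of lines $L$ with $L \cap C = \{p\}$ that properly cross $C$ at $p$; hence the assertion ``$\butf(p) \neq \emptyset$ for every $p \in C$'' is literally the same as ``through every $p \in C$ there is a line meeting the threadability criteria.'' That equivalence is all that is really needed.

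For the forward direction I would assume $C$ is threadable and apply Lemma~\lemref{thread2line}: for each $p \in C$ there is a line $L$ with $L \cap C = \{p\}$ that properly crosses $C$ at $p$, and by the definition of the butterfly such an $L$ belongs to $\butf(p)$, so $\butf(p) \neq \emptyset$. For the reverse direction I would assume $\butf(p) \neq \emptyset$ for every $p$; then for each $p$, any line chosen from $\butf(p)$ witnesses the hypothesis of Lemma~\lemref{line2thread}, and that lemma immediately yields threadability of $C$.

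The only genuinely subtle point lies behind Lemma~\lemref{line2thread}, namely whether the per-point lines can be stitched into a single \emph{continuous} rigid motion or whether an instantaneous ``jump'' is forced; but this has already been discharged in Section~\secref{Hulls}: since $H^+$ and $H^-$ vary continuously with $p$, so does the open double wedge $\butf(p)$, and a continuous selection (for instance, the bisecting line of the wedge) exists precisely because $\butf(p)$ is nonempty throughout. Consequently there is no remaining obstacle here; the proof is short bookkeeping, and the main care I would take is to state explicitly which earlier result supplies each direction, and to note that pointwise nonemptiness of the butterfly—not a pre-chosen global selection—is what Lemma~\lemref{line2thread} consumes.
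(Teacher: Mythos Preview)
Your proposal is correct and matches the paper's approach: the paper likewise treats this lemma as an immediate consequence of Lemma~\lemref{line2thread} together with the definition of $\butf(p)$ (with Lemma~\lemref{thread2line} implicitly supplying the forward direction). If anything, you are more explicit than the paper in naming which earlier lemma handles which implication.
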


We can also now see this characterization, which is the basis of the algorithm
in the next section:
\begin{lemma}
A curve $C$ is threadable if and only if, for every $p \in C$,
the upper and lower hulls intersect in exactly $p$:
$H^+  \cap H^- = \{p\}$.
\lemlab{HullInt}
\end{lemma}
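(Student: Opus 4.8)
The plan is to derive both directions from Lemma~\lemref{bf-non-empty}, which already reduces threadability to ``$\butf(p)\ne\emptyset$ for all $p$''. Throughout let $C_1(p),C_2(p)$ be the two sub-curves into which $p$ divides $C$, and $H^+(p)=\mathrm{conv}(C_1(p))$, $H^-(p)=\mathrm{conv}(C_2(p))$; when $\butf(p)\ne\emptyset$ these are exactly the hulls of Section~\secref{Butterflies}. I will treat the polygonal case (the one the algorithm needs); the smooth case is analogous with supporting lines in place of cone edges. The one elementary fact I reuse: if a compact set $S$ lies in a closed half-plane and meets its bounding line only at a point $q$, then $\mathrm{conv}(S)$ meets that line only at $q$ as well, since a convex combination lying on the line is a convex combination of the points of $S$ that lie on the line. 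For the forward direction, suppose $C$ is threadable, fix $p$, pick $L\in\butf(p)$, and apply a rigid motion so that $L$ is the $x$-axis. Properness of the crossing puts $C_1(p)\subseteq\{y\ge0\}$ and $C_2(p)\subseteq\{y\le0\}$, while $L\cap C=\{p\}$ gives $C_i(p)\cap L=\{p\}$. By the fact above $H^+(p)\cap L=\{p\}=H^-(p)\cap L$, so $H^+(p)\cap H^-(p)\subseteq\{y\ge0\}\cap\{y\le0\}=L$, hence equals $\{p\}$.

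For the reverse direction I argue the contrapositive: if $C$ is not threadable then $\butf(p)=\emptyset$ for some $p$ (Lemma~\lemref{bf-non-empty}), and I must produce a point $q$ with $H^+(q)\cap H^-(q)\ne\{q\}$. If $H^+(p)\cap H^-(p)\ne\{p\}$ already, take $q=p$. Otherwise $H^+(p)\cap H^-(p)=\{p\}$, and I split on whether $p$ is an extreme point (``vertex'') of both hulls or lies in the relative interior of an edge of one of them; because $C$ is simple, a straight sub-curve cannot fold back, so even a one-dimensional $H^+(p)$ or $H^-(p)$ has $p$ as an endpoint, and these two cases are exhaustive. Suppose first $p$ is a vertex of both hulls. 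Then the tangent cones $T^\pm$ of $H^\pm(p)$ at $p$ are \emph{pointed} (angular width $<\pi$), and since $H^+(p)\cap H^-(p)=\{p\}$ they meet only at $p$ (a shared ray would force a shared segment). Two pointed cones with common apex $p$ meeting only at $p$ can be strictly separated by a line $\ell$ through $p$: a short angular-interval computation places the antipodal pair of directions defining $\ell$ into the two gaps between the cones. Then $C_1(p)\setminus\{p\}\subseteq T^+\setminus\{p\}$ and $C_2(p)\setminus\{p\}\subseteq T^-\setminus\{p\}$ lie strictly on opposite sides of $\ell$, so $\ell\in\butf(p)$ --- a contradiction, so this subcase cannot arise.

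Hence, WLOG, $p$ lies in the relative interior of an edge $e=[a,b]$ of $H^+(p)$, with distinct $a,b\in C_1(p)\setminus\{p\}$ and $p$ strictly between $a$ and $b$. Walk along $C_1(p)$ starting at $p$, let $u$ be whichever of $a,b$ is reached first and $v$ the other, and split $C$ at $u$. One resulting sub-curve consists of all of $C_2(p)$ followed by the stretch of $C_1(p)$ from $p$ to $u$, so its hull contains $\mathrm{conv}\{p,u\}=[u,p]$; the other is the stretch of $C_1(p)$ from $u$ onward, which contains $v$, so its hull contains $\mathrm{conv}\{u,v\}=[u,v]=e$. Since $p$ lies strictly between $u$ and $v$, $[u,p]\subseteq e$, so both hulls contain the nondegenerate segment $[u,p]$; therefore $H^+(u)\cap H^-(u)\supseteq[u,p]\ne\{u\}$ and $q=u$ works. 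This completes the contrapositive.

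The crux is this reverse direction, and specifically the realisation that separating $H^+(p)$ from $H^-(p)$ by a line through $p$ is \emph{not} sufficient: $H^+(p)\cap H^-(p)=\{p\}$ at a single $p$ does not force $\butf(p)\ne\emptyset$, because when $p$ is interior to an edge of $H^+(p)$ the only supporting line there runs along the whole edge. The productive structure is the dichotomy above --- vertex-of-both, dispatched by the pointed-cone separation, versus interior-of-edge --- together with the idea that in the bad case one should blame not $p$ itself but the \emph{first-encountered} endpoint $u$ of the offending edge, at which two collinear points (one in each of the two new sub-curves) pin the hull intersection down to a whole segment. The remaining care is with degenerate endpoint/segment configurations, which simplicity of $C$ keeps harmless.
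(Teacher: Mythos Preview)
Your proof is correct. The forward direction matches the paper's (theirs is phrased as the contrapositive: if $H^+\cap H^-$ is a 2D region or a segment then $\butf(p)=\emptyset$; yours picks $L\in\butf(p)$ and pushes half-plane containments through to the hulls). The reverse direction, however, is genuinely different and more careful. The paper asserts the \emph{pointwise} implication ``$H^+(p)\cap H^-(p)=\{p\}\Rightarrow\butf(p)\ne\emptyset$,'' justified only by ``one could rotate a line through $p$ until it hit $H^\pm$''; you correctly observe that this fails when $p$ lies in the relative interior of an edge $[a,b]$ of $H^+(p)$, since the only supporting line there runs along the edge and already meets $C$ at $a$ and $b$ (e.g.\ take $C_1(p)$ the chain $(0,0),(1,0),(1,1),(-1,1),(-1,0)$ and $C_2(p)$ the segment to $(0,-1)$). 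Your repair---vertex-of-both dispatched by strict separation of the two pointed tangent cones, interior-of-edge dispatched by shifting the blame from $p$ to the first-reached endpoint $u\in\{a,b\}$, where the two hulls at $u$ visibly share the segment $[u,p]$---is a genuine addition. What your approach buys is an airtight argument that actually uses the global hypothesis; what the paper's buys is brevity, at the cost of the gap you have identified and closed.
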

\begin{proof}
\begin{enumerate}[($\Rightarrow$)]
\item Suppose $C$ is threadable, but $H^+  \cap H^- \neq \{p\}$. 
We then show $C$ could not be threadable.
\begin{itemize}
\item{Case 1}: $H^+ \cap H^-$ is a 2D region (Fig.~\figref{Cases12}(a)).
Then $p$ is strictly interior to one of $H^+$ or $H^-$. So, the butterfly $= \varnothing$. 
Therefore $C$ is not threadable by Lemma~\lemref{bf-non-empty}.
\item{Case 2}: $H^+ \cap H^-$ is a segment (Fig.~\figref{Cases12}(b)). Note the intersection could not
consist of $\geq 2$ segments, for that would violate the convexity of convex hulls. 
So, the butterfly wings reduce to a line;
so the butterfly is empty. 
And again, $C$ is not threadable by Lemma~\lemref{bf-non-empty}.
\end{itemize}
\end{enumerate}

\begin{enumerate}[($\Leftarrow$)]
\item
Assume $H^+  \cap H^- = \{p\}$ for every $p$. Then, by the definition of $\butf(p)$, for every $p$ 
the butterfly is non-empty, because one could rotate a line through $p$ until it
hit $H^{\pm}$. So Lemma~\lemref{bf-non-empty} implies that $C$ is threadable.
\end{enumerate}
\end{proof}

\begin{figure}[htbp]
\centering
\includegraphics[width=0.75\linewidth]{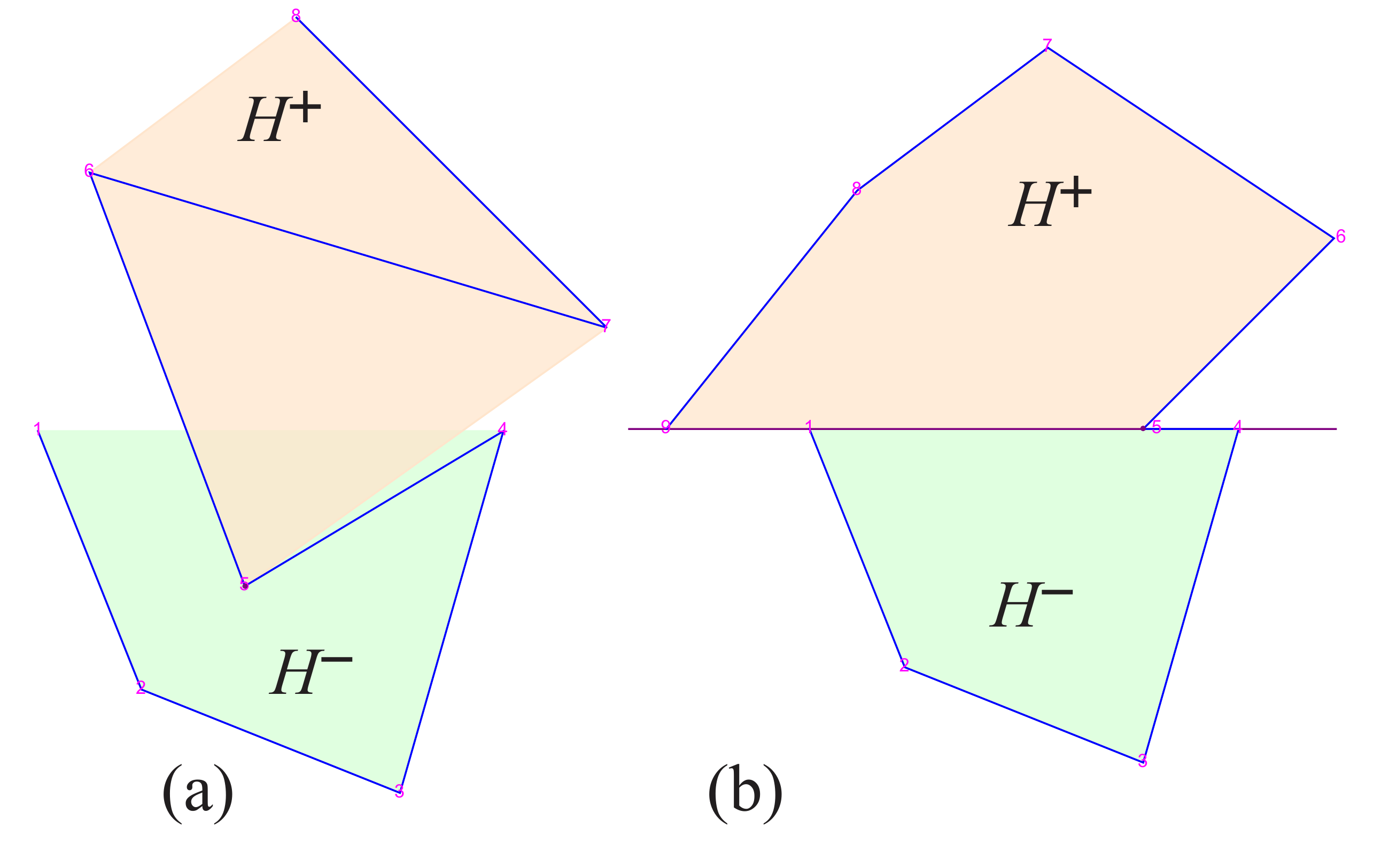}
\caption{(a)~An example of Case 1: $H^+ \cap H^-$ is a 2D region.
(b)~An example of Case 2: $H^+ \cap H^-$ is a segment.}
\figlab{Cases12}
\end{figure}


\section{Algorithm for Threadability}
\seclab{Algorithm}
In light of Lemma~\lemref{HullInt}, we can detect whether a polygonal chain
is threadable by computing $H^+$ and $H^-$ for all $p$ along $C$, and verifying
that $p$ never falls inside either hull.
Let $p$ be a point on $C=(v_1,v_2,\ldots,v_n)$, which we view as moving ``vertically
downward'' from $v_1$ (top) to $v_n$ (bottom).
Let the edges of $C$ be $e_i=(v_{i-1} v_i)$.
We concentrate on constructing $H=H^+$ as $p$ moves downward along $C$.
Clearly the same process can be repeated to construct $H^-$.

As $p$ moves down along $C$,
$H = \textrm{hull} \{ v_1, \ldots v_{i-1}, p \}$ grows in the sense
that the hulls form a nested sequence.
Thus once a vertex of $C$ leaves $\bH$, it never returns to $\bH$
(where $\bH$ is the boundary of $H$.)
At any one time, $p$ is a vertex of $H$.
Let $a_1, a_2$ be the vertices of $H$ right-adjacent to $p$, and
$b_1,b_2$ the vertices left-adjacent, so that $(b_2,b_1,p,a_1,a_2)$
are consecutive vertices of $H$.
Finally, let $A$ and $B$ be the lines through $a_1 a_2$ and $b_1 b_2$ respectively.
See Fig.~\figref{GrowingHull_4steps}.


\full{
\begin{figure}[htbp]
 \makebox[\textwidth][c]{%
\includegraphics[width=1.4\linewidth]{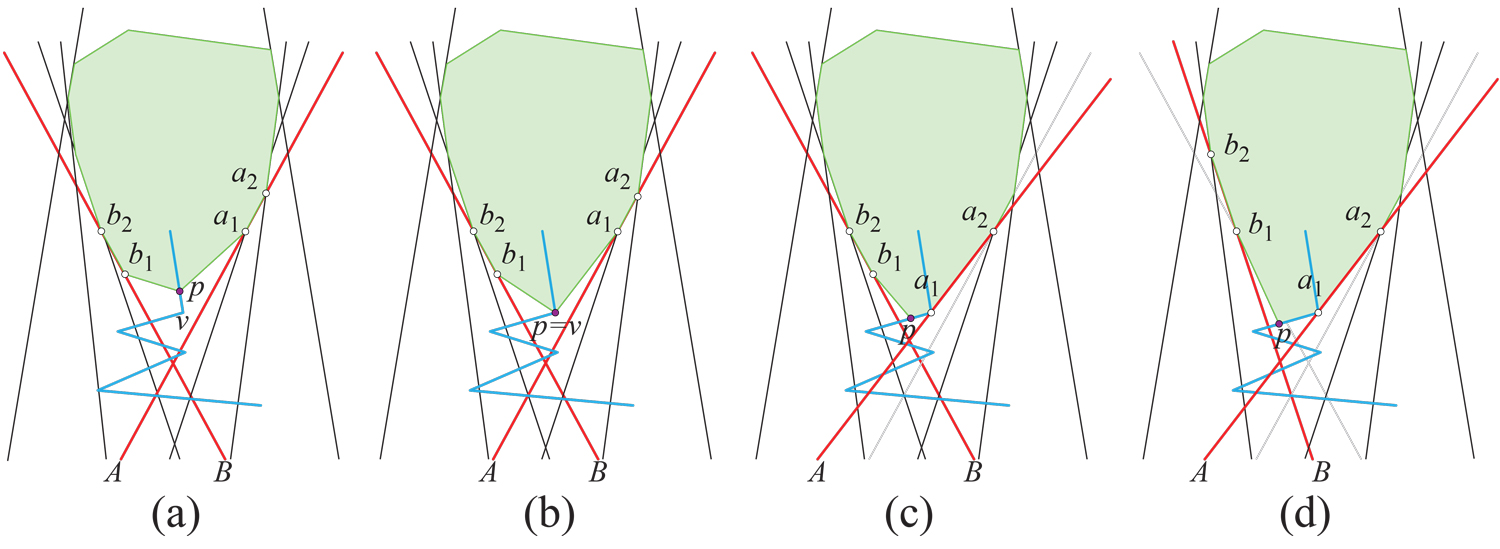}
}
\caption{Algorithm snapshots.
(a)~$H$ grows without combinatorial change until $p$ reaches $v$.
(b)~$p=v$ event.
(c)~$a_1,a_2$ updated. $e_i$ crosses $B$.
(d)~$b_1,b_2$ updated.}
\figlab{GrowingHull_4steps}
\end{figure}
}

\conf{
\begin{figure}[htbp]
\centering
\includegraphics[width=1.05\linewidth]{Figures/GrowingHull_4steps}
\caption{Algorithm snapshots.
(a)~$H$ grows without combinatorial change until $p$ reaches $v$.
(b)~$p=v$ event.
(c)~$a_1,a_2$ updated. $e_i$ crosses $B$.
(d)~$b_1,b_2$ updated.}
\figlab{GrowingHull_4steps}
\end{figure}
}

\full{
\begin{algorithm}[htbp]
\caption{Threadable Curve Algorithm: Upper Hull}
\DontPrintSemicolon 
    \SetKwInOut{Input}{Input}
    \SetKwInOut{Output}{Output}

    \Input{Polygonal chain $C=\{v_1,\ldots,v_n\}$}
    \Output{Upper convex hull $H$}
    
    \BlankLine
    \tcp{$p$: Moving point on edge $e_i=(v_{i-1} v_i)$. Fig.~\figref{GrowingHull_4steps}(a).}
    \tcp{$H$: Upper convex hull of $\{v_1,\ldots, v_{i-1}, p\}$.}
    \tcp{$a_1,a_2$: Vertices of $H$ right-adjacent to $p$.}
    \tcp{$b_1,b_2$: Vertices of $H$ left-adjacent to $p$.}
    \tcp{$A$: line through $a_1 a_2$.}
    \tcp{$B$: line through $b_1 b_2$.}
    \BlankLine
    
     \While{$p$ has not reached last vertex $v_n$}{
     
         Compute next event on $e_i$: Intersect $e_i$ with $A$ and $B$.
         
         \If(\tcp*[h]{Fig.~\figref{GrowingHull_4steps}(b)})
         {
         Next event is vertex $v=v_i$. }
         {
         {\If{Turn at $p=v$ enters $\triangle a_1, v, b_1$ and so enters $H$}
         {\Return NotThreadable}
         }
         { \If{Turn at $p=v$ angles outside $H$, so next edge $e_{i+1}$ is on $H$}
         {Update $A$ or $B$.
          \tcp{Fig.~\figref{GrowingHull_4steps}(c).}
          }
         }
         }
         
         \If{Next event is intersection with $A$ or $B$}
         {Update $A$ or $B$, whichever intersected.
         
         \tcp{Fig.~\figref{GrowingHull_4steps}(d).} 
         }  	  
 
     }
     \BlankLine

\end{algorithm}
}

We now walk through the algorithm, whose pseudocode is displayed%
\full{as Algorithm~1.}
\conf{in the full version.}
Let $p$ be on the interior of an edge $e_i=(v_{i-1} v_i)$.
The portion of $e_i$ already
passed by $p$
must lie inside $H$, and the remaining portion outside $H$.
As long as $p$ remains within the wedge region delimited by $A$, $B$,
and $\bH$, the combinatorial structure of $H$ remains fixed
(Fig.~\figref{GrowingHull_4steps}a).
If $p$ crosses $A$ or $B$---say $A$---then $a_1$ leaves $H$ and
$a_1,a_2$ become the next two vertices counterclockwise around $\bH$.
If $p$ reaches the endpoint $v_i$ of $e_i$, then
if $e_{i+1}$ angles outside $H$, $v_i$ becomes a new $a_1$ or $b_1$
depending on the direction of $e_{i+1}$.
If instead, $e_{i+1}$ turns inside $H$, advancing $p$ would enter $H$
and we have detected that $C$ is not threadable by Lemma~\lemref{HullInt}.

All the updates just discussed are constant-time updates:
detecting if $e_i$ crosses $A$ or $B$, updating 
$a_1, a_2$ and $b_1,b_2$,
and detecting if $e_{i+1}$ turns inside $H$, entering
$\triangle b_1 v_i a_1$. 

At the end of the algorithm, $H$ is the hull of $C$.
It may seem surprising that we can compute the hull of $C$
in linear time (rather than $O(n \log n)$), but Melkman showed
long ago that the hull of any simple polygonal chain can be computed
\full{
in linear time~\cite{melkman1987line}.\footnote{
See Dan Sunday's description: 
\url{http://geomalgorithms.com/a12-_hull-3.html}.
See also~\cite{levcopoulos2002adaptive} for results on hulls of 
self-intersecting chains.}
}
\conf{
in linear time~\cite{melkman1987line}.
}
The chain $C$ acts almost as a pre-sorting of the points.


\subsection{Rigid Motions}
\seclab{Motions}
At any stage where the butterfly $\butf(p)$ is non-empty,
we could choose the line $L$ to bisect $\butf(p)$.
This choice was used to produce the online animations cited in Section~\secref{Introduction}.
To prepare for an analogous 3D-computation in Section~\secref{3D},
we explain the bisection choice in terms of vectors normal to $L$.
Fig.~\figref{NormalVecs2D}(a) shows the possible $L$ choices through $p$ dictated by
the two incident edges of $H^+$ and the two incident edges of $H^-$, illustrated by a rightward rays
from $p$ along $L$.
Rotating these $90^\circ$ in~(b) of the figure yields the possible vectors normal to $L$.
The intersection of the $H^+$ and $H^-$ constraints yields an interval
corresponding to $\butf(p)$, which is then bisected to select a particular $N$ and therefore $L$.
(The intersection always yields an interval [rather than two intervals] because each of the $H^+$ and $H^-$ constraints is $\le$ a semicircle.)
\begin{figure}[htbp]
\centering
\includegraphics[width=0.95\linewidth]{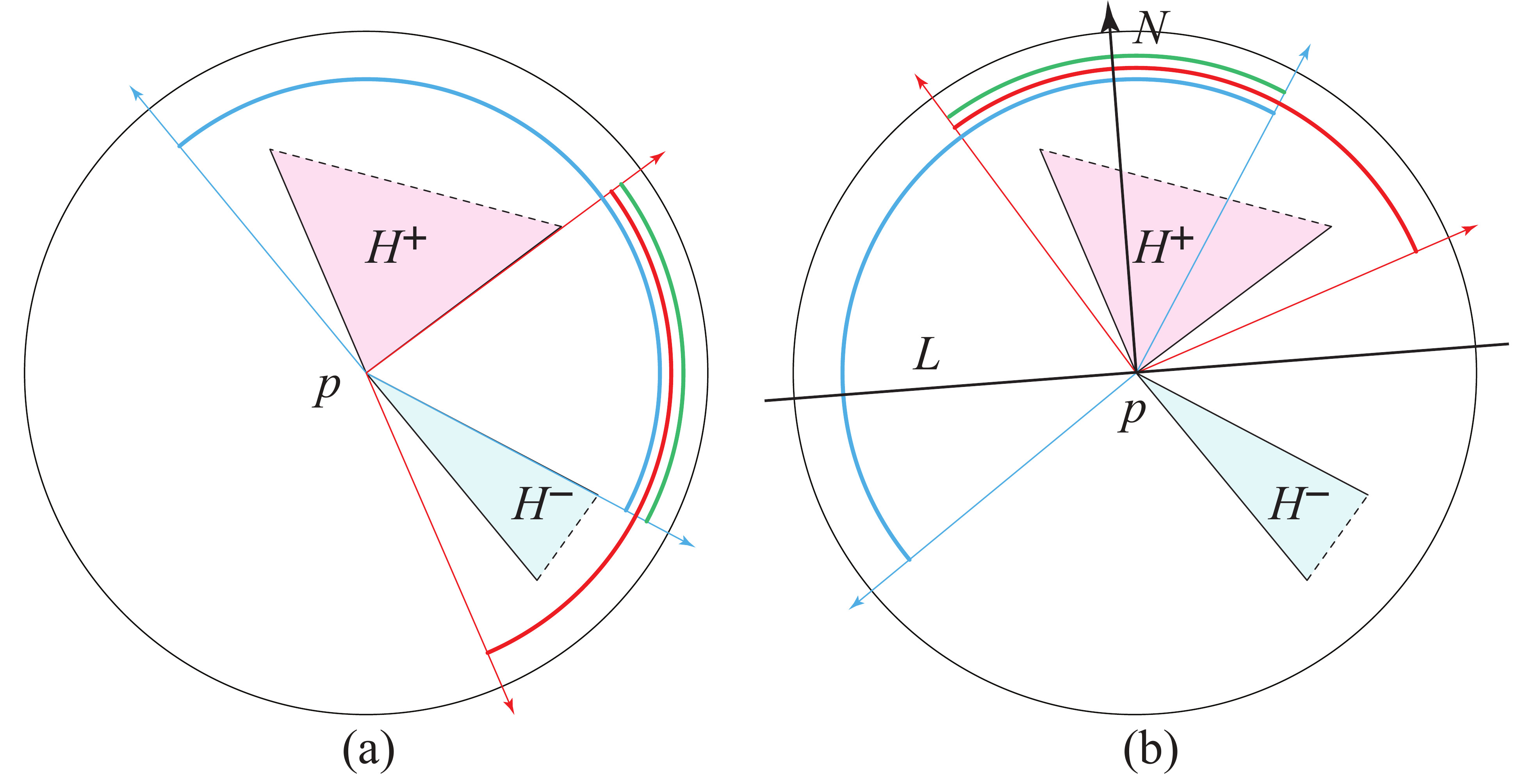}
\caption{(a)~Rightward rays along possible $L$'s. (b)~Normal vector $N$ to $L$ (black).}
\figlab{NormalVecs2D}
\end{figure}

Let $H_j^+$ and $H_j^-$, $j=1,\ldots,m$ be the sequence of hulls at the points
at which there is a combinatorial change in either.
Let $r_j \subseteq e$ be the range of $p$ along edge $e$ of $C$ between
$\{ H_j^+, H_j^- \}$ and $\{ H_{j+1}^+, H_{j+1}^- \}$.
Then as $p$ moves along $r_j$, the wings of the butterfly $\butf(p)$
have the same set of tangency points on the hulls.
With $L$ chosen as the bisector of $\butf(p)$,
the range of $p$ along $r_j$
leads to a translation of $p$ along $r_j$ and a rotation of $L$.
It is not difficult to see that the rotation implied by $p$
moving along $r_j$ reverses at most once,
from clockwise to counterclockwise or vice versa.
This is evident in Fig.~\figref{Theta1Max},
where the butterfly angle $\q$ bisected to yield $L$ has at most one
local maximum.
Thus each slide of $p$ along $r_j$ leads to at most two monotonic rotations.
We call a slide and a simultaneous monotonic rotation
an \emph{elementary rigid motion}.
But note that, although ``elementary,'' these motions are not
pure rotations and pure translations, but rather the particular mix
determined by the slide and the butterfly bisection.
We leave these elementary motions as
the output rigid motions, not further analyzed into explicit 
analytical expressions.
\begin{figure}[htbp]
\centering
\includegraphics[width=0.95\linewidth]{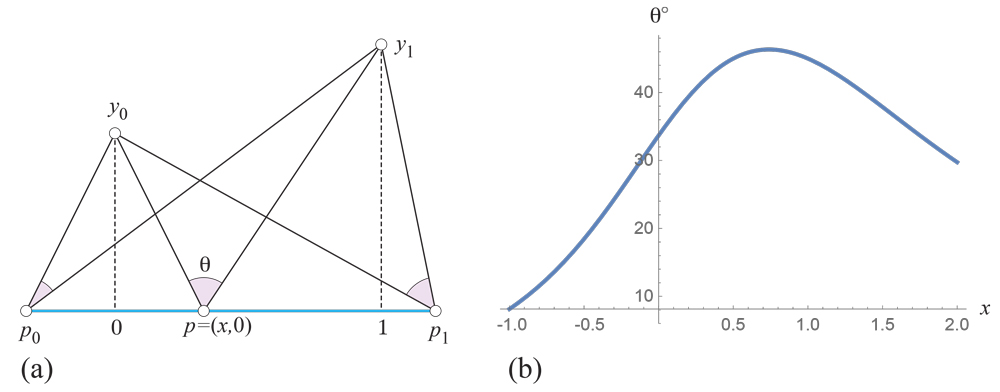}
\caption{(a)~$p$ slides along edge $e$ from $p_0$ to $p_1$,
$r_j=(p_0,p_1) \subseteq e$.
(b)~The butterfly angle $\q$ has at most one local maximum throughout
the range.}
\figlab{Theta1Max}
\end{figure}

Thus the sequence of hulls $O(n)$ provides a set of $O(n)$ elementary rigid motions to thread $C$,
which we used to produce the online animations. 

\subsection{Difficult-to-Thread Curves}
\seclab{Worst}
One easy consequence of our analysis is that a threadable curve need never 
``back-up'' while threading through a hole,
because $p$ never enters $H^{\pm}$ as it progresses along the chain.
However, one could define the ``difficulty'' of threading by, say, 
integrating the absolute value of the back-and-forth rotations necessary to thread.
Then variations on the curve shown in 
Fig.~\figref{ThreadableCurveWorst} are difficult to thread in this sense.
For each pair of adjacent spikes require a rotation by $\q$,
and with many short spikes, there is no bound on $\sum |\q|$
even for a fixed-length chain.\footnote{Thanks to Anna Lubiw for this observation.}
\begin{figure}[htbp]
\centering
\includegraphics[width=0.75\linewidth]{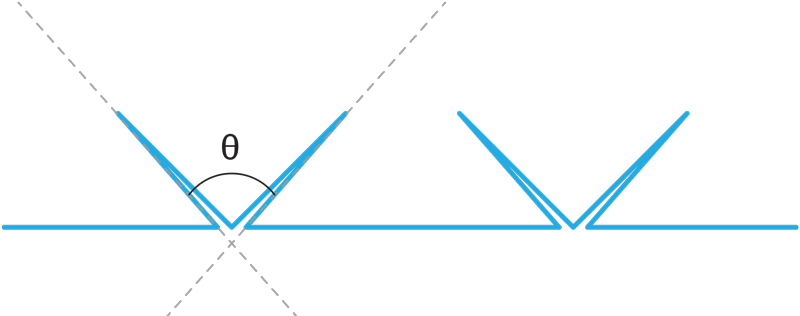}
\caption{A threadable curve that requires repeated rotations.
Animation: \protect\url{http://cs.smith.edu/~jorourke/Threadable/}, Example~2.}
\figlab{ThreadableCurveWorst}
\end{figure}

\section{Algebraic Curves}
\seclab{Algebraic}
\conf{In the full version, we sketch an argument that shows detection of threadability for
algebraic curves is achievable in time polynomial in the degree of the curve.
}
\full{Here we sketch an argument that shows detection of threadability for
algebraic curves is achievable in time polynomial in the degree of the curve.
We avoid computing the hulls, the main tool in the algorithm just presented,
and instead rely on bi-tangents.
We use this lemma:
\begin{lemma}
Let $C$ have a non-empty butterfly at $p_1 \in C$,
and an empty butterfly at  $p_2 \in C$.
Then for some $p^* \in C$ between $p_1$ and $p_2$,
$bf(p^*)$ is empty and the wing-lines coincide in a
line $L$ that is tangent to $C$ at two (or more) points.
\lemlab{bfDisappear}
\end{lemma}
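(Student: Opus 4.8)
The plan is to exploit the continuity of the butterfly established in Section~\secref{Hulls} together with a connectedness/intermediate-value argument along the curve. First I would parametrize $C$ by arc length from $p_1$ to $p_2$, and consider the function that records the ``butterfly angle'' $\q(p)$, i.e., the angular width of the open double wedge $\butf(p)$ (with $\q(p)=0$ interpreted as the wings having merged and the butterfly being empty). By Lemma~\lemref{bf-non-empty} and our hypotheses, $\q(p_1)>0$ and $\q(p_2)=0$ (or more precisely $\butf(p_2)=\varnothing$, which by the discussion in Section~\secref{Butterflies} means the two wing-lines $w^+,w^-$ coincide, equivalently $\q=0$). Since the hulls $H^+,H^-$ vary continuously in the Hausdorff metric as $p$ moves, and the wing-lines are determined by the outer common tangents from $p$ to $H^+$ and $H^-$, the wing-lines vary continuously, and hence $\q$ is a continuous function of $p$. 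Let $p^*$ be the \emph{first} point along $C$ from $p_1$ at which $\q(p^*)=0$ (this exists because $\{p : \q(p) \ge \epsilon\}$ is relatively open near $p_1$ and the set where $\q = 0$ is closed and nonempty).

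Second I would analyze the geometry at that critical point $p^*$. At $p^*$ the two wing-lines coincide in a single line $L$; I want to show $L$ is tangent to $C$ at two or more points. By the definition of the wing-lines in Section~\secref{Butterflies}, $w^+$ is the counterclockwise stopping line through $p^*$ --- it is supported by a point of $C$ on one of the two halves of $C$ relative to $p^*$ --- and $w^-$ is the clockwise stopping line, supported by a point of $C$ (possibly on the other half). Equivalently, $L=w^+=w^-$ has the property that, arbitrarily close to this orientation, every rotated line meets $C$ outside $p^*$; combined with the fact that $H^+\cap H^- = \{p^*\}$ fails to be a ``clean'' crossing configuration only in the limit, $L$ must be a supporting line of both $H^+$ and $H^-$ at points other than $p^*$. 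Here I would invoke Lemma~\lemref{HullInt}: since $\q(p^*)=0$ we are exactly in the boundary case between ``$H^+\cap H^-=\{p^*\}$'' and the Case-2 failure (intersection a segment), so $L$ is the unique common line, it contains $p^*$, and it touches $H^+$ in a point or edge and $H^-$ in a point or edge, each distinct from $p^*$ (if one of those touchings were only at $p^*$ the two wings could not have merged). A supporting line of $H^{\pm}$ touching it at a point $q\ne p^*$ is a tangent line to $C$ at $q$ (a vertex of the hull is a vertex of $C$, and an edge of the hull is tangent along that edge), giving at least one tangency above $p^*$ and at least one below, hence two or more tangencies of $L$ with $C$.

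Third, I would tie off the claim that $\butf(p^*)$ is empty: this is immediate once the wing-lines coincide, because $\butf(p)$ was defined as the \emph{open} double wedge strictly between $w^+$ and $w^-$, which is empty when they are equal; alternatively it follows from Lemma~\lemref{bf-non-empty} together with the fact that $C$ cannot be threaded when the hulls meet in more than $p^*$. I should also note $p^*$ may equal $p_2$ itself, which is allowed by the ``between'' phrasing (inclusive); and $p^* \ne p_1$ since $\q(p_1)>0$.

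The main obstacle I anticipate is making the ``wing-lines merge into a tangent bi-tangent line'' step fully rigorous rather than merely plausible: one must rule out degenerate ways the open double wedge can become empty --- for instance, whether a wing could ``stop'' at $p^*$ itself (against the curve arbitrarily near $p^*$) rather than against a far point, and whether the two wings could merge along a direction that is tangent to $C$ at $p^*$ on one side only. Handling this cleanly will require using the \emph{properness} of the crossing condition (Lemma~\lemref{thread2line}) at nearby points $p$ and a compactness argument: take a sequence $p_k\to p^*$ with $\q(p_k)>0$, let $L_k$ be the bisector of $\butf(p_k)$, pass to a convergent subsequence $L_k\to L$, and identify $L$ as the coincident wing-line at $p^*$; then the tangency points are the Hausdorff limits of the support points of $H^+(p_k)$ and $H^-(p_k)$ along $L_k$, which are bounded away from $p^*$ precisely because $\q(p^*)=0$ forces the $H^+$ and $H^-$ support arcs to abut. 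Packaging this limit argument is the technical heart of the proof; everything else is either the already-established continuity of hulls or direct unwinding of definitions.
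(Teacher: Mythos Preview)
Your proposal is correct and follows essentially the same approach as the paper's own proof: use the continuity of the butterflies (established in Section~\secref{Hulls}) to locate the first point $p^*$ where the wedge degenerates, and then take a limit of the wing-line tangency points as $p\to p^*$ to exhibit the bi-tangent line $L$. Your write-up is in fact considerably more careful than the paper's---which simply asserts that ``at some stage these tangency points will no longer discontinuously change''---and your anticipated compactness argument (passing to a subsequence of support points) is exactly the right way to make that step rigorous.
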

\begin{proof}
The existence of $p^*$ follows from the continuity of the butterflies:
As a point $p$ moves from $p_1$ to $p_2$,
the non-empty butterfly at $p_1$ must disappear before $p_2$ is reached.
Let $p$ be close to the disappearing point $p^*$,
with $\butf(p)$ non-empty with wings $w^+$ and $w^-$.
Each of $w^+$ and $w^-$ must be tangent to $C$ at a point, and the two
tangency points must be distinct.
As $p$ approaches $p^*$, at some stage these tangency points will
no longer discontinuously change. Then at $p^*$, $A=B=L$ passes through those
limit tangency points, e.g., $t_1$ and $t_2$
in Fig.~\figref{ButterFliesSmoothCurve}.
\end{proof}


\begin{figure}[htbp]
\centering
\includegraphics[width=0.6\linewidth]{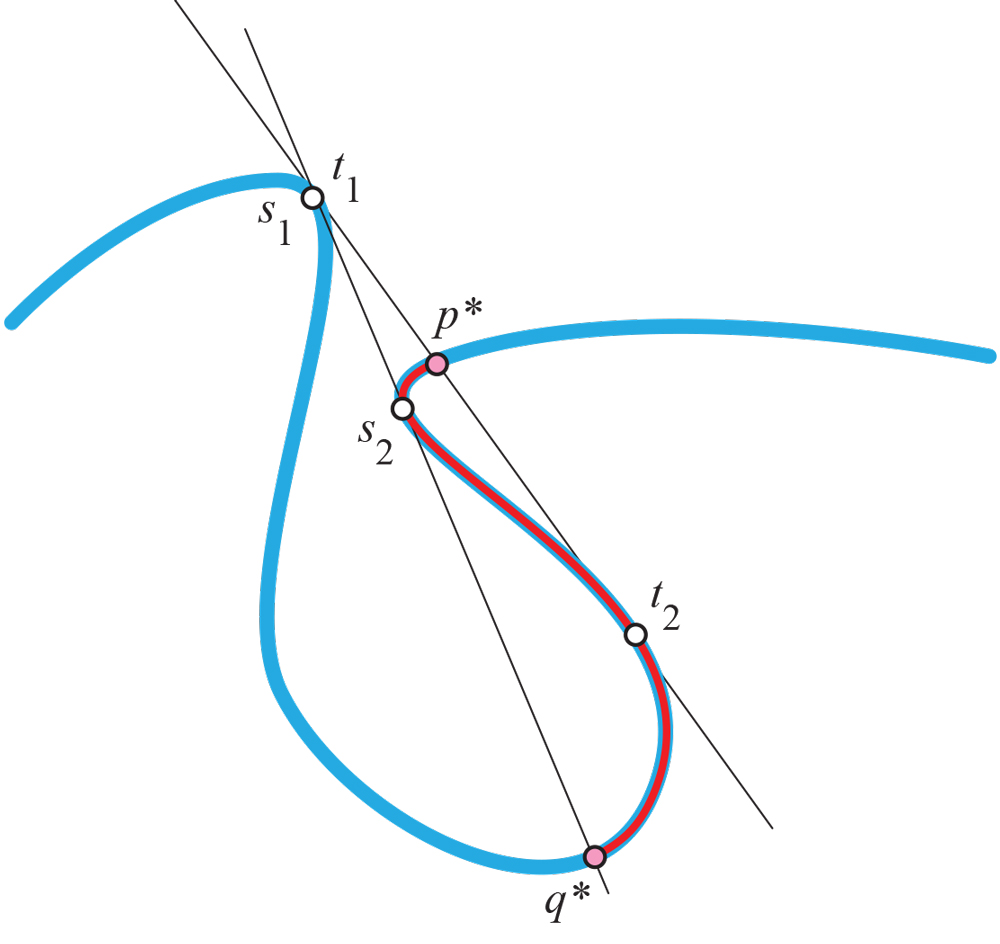}
\caption{A non-threadable smooth curve. Red section has no butterflies.
Both $\butf(p^*) = \varnothing$ and  $\butf(q^*) = \varnothing$.
$t_1,t_2$ and $s_1,s_2$ are the wing-line tangency points,
for $p^*,q^*$ respectively.}
\figlab{ButterFliesSmoothCurve}
\end{figure}

This lemma allows us to detect threadability by checking
all the double tangencies (\emph{bi-tangents}) of $C$, as follows.
Let $L$ be a bi-tangent of $C$, tangent at $t_1$ and $t_2$.
If $L$ does not cross $C$ at some other point $p$, then it
is irrelevant to threadability.
Suppose $L$ does cross $C$ uniquely at $p$.
Then check whether or not this implies an empty $\butf(p)$.
This depends on whether $p$ is between $t_1$ and $t_2$
($p^*$ in Fig.~\figref{ButterFliesSmoothCurve})
or outside those tangencies along $L$ ($q^*$ in the figure), and whether $C$ is locally
left or right at the tangency points.

If, for every bi-tangent $L$, and every corresponding crossing $p$, 
$\butf(p)$ is non-empty, then $C$ is threadable. Otherwise, it is not threadable.
The time-complexity of this algorithm is dependent on the number
of bi-tangents. The other computations (intersecting $L$ with $C$, whether
$C$ is left or right at a tangency) are achievable within the degree of $C$.

It is known that the number of bi-tangents to a curve of
algebraic degree $d$ is $O(d^4)$~\cite{MSE2},
and they can be listed in that time.
So, without delving into details, we can see that the threadability
of an algebraic curve can be decided in time polynomial in the degree of the curve.
}

\section{Threadable Curves in 3D}
\seclab{3D}
The results in Section~\secref{Algorithm}
can be extended to $\R^3$,
asking whether a 3D polygonal chain $C$ can pass through a point-hole in a plane.
First we roughly sketch an algorithm.

Again Lemma~\lemref{HullInt} is the key: we need that $H^+  \cap H^- = \{p\}$
holds for all $p$ on $C$. Again computing $H^+$ and $H^-$ will suffice to 
answer all questions;
see Fig.~\figref{3D_Chain}.
\begin{figure}[htbp]
\centering
\full{\includegraphics[width=0.75\linewidth]{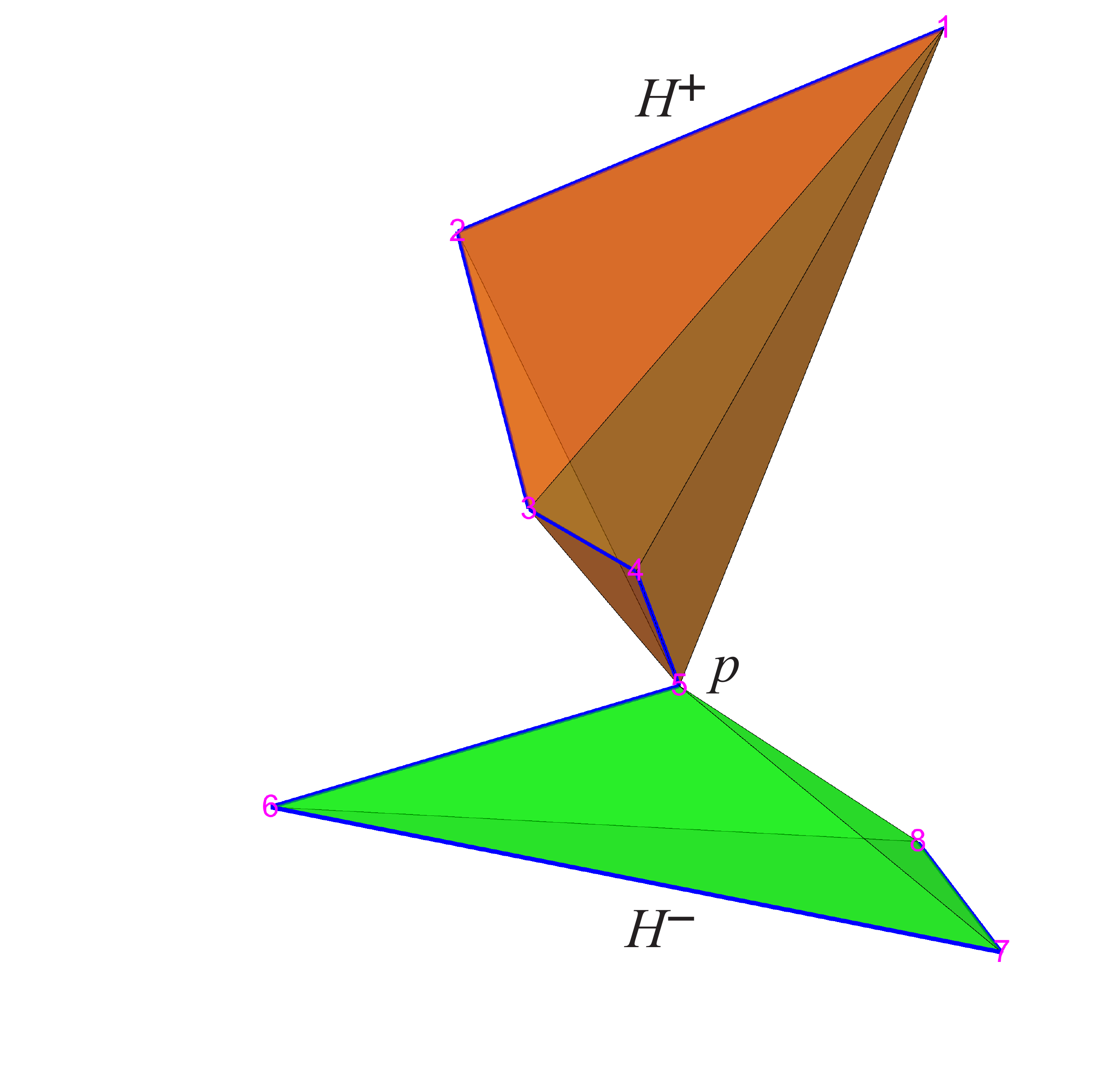}}
\conf{\includegraphics[width=0.6\linewidth]{Figures/3D_Chain}}
\caption{Upper and lower hulls for a 3D polygonal chain.
Animation: \protect\url{http://cs.smith.edu/~jorourke/Threadable/}, Example~6.}
\figlab{3D_Chain}
\end{figure}
But now what was the simple wedge region between hull supporting
lines $A$, $B$,
and $\bH$, becomes a more complex region $R$ bounded by $O(n)$
hull-supporting planes, and the portion of $\bH$ formed by 
the triangles incident to $p$, i.e., what is called
star$(p)$ in simplicial-complex theory (which has size $O(n)$).
Setting aside complexity issues temporarily,
the next edge $e_{i+1}$ on
which $p$ will travel must be intersected
with the planes bounding this region $R$,
to determine whether $R$ changes combinatorially, and if
so, which supporting plane is first pierced by $e_{i+1}$.

The planes bounding $R$ that are not determined by faces
in star$(p)$ are the planes incident to an edge of link$(p)$,
i.e., the edges of star$(p)$ not incident to $p$, which form a topological
circle. See Fig.~\figref{IcosaFig}.
When $e_{i+1}$ pierces a plane $A$ supporting face $\triangle abc$
of $H$, with $ab$ an edge of
link$(p)$, then $ab$ is deleted from the link, and $ac$ and $cb$ added,
and the planes incident to these new link edges are added to those
defining $R$.

\begin{figure}[htbp]
\centering
\includegraphics[width=0.5\linewidth]{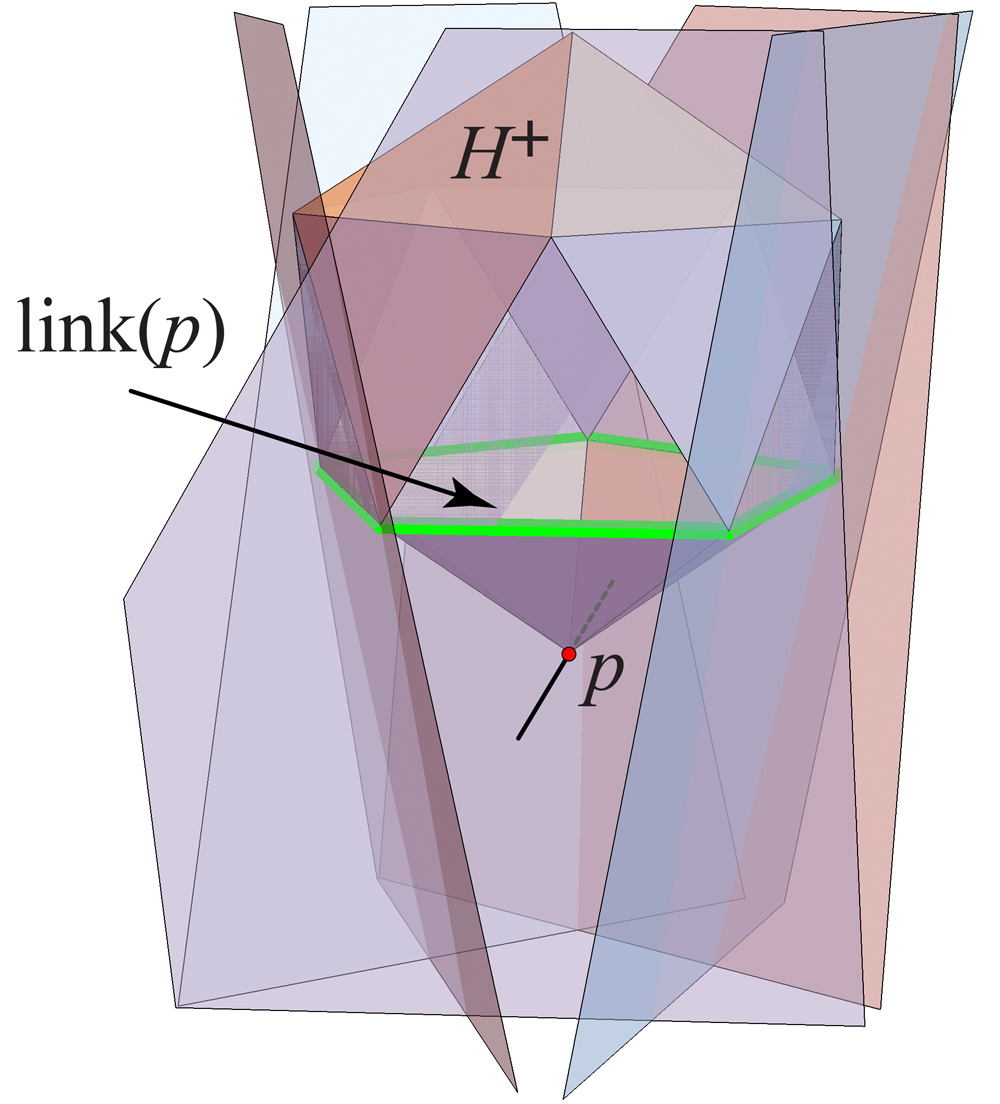}
\caption{Faces sharing an edge with link$(p)$ are extended to form the
lower part of $R$.
$H^+$ does not change combinatorially until $p$ crosses one of
those planes.}
\figlab{IcosaFig}
\end{figure}

This allows $H$ to be maintained throughout the movement
of $p$ along $C$.
As in 2D,  $C$ is threadable if and only if $p$ never enters either hull.

If $C$ is threadable, selecting planes in the more complex
$\butf(p)$ regions and determining rigid motions that achieve the threading are more
complicated tasks than in 2D.

\subsection{Updating the hull $H$ quickly}
\seclab{TChan}
Timothy Chan's powerful dynamic data structure for updating 3D convex hulls~\cite{chan2010dynamic}
provides the tools needed to update the hull $H$ quickly.
Here ``quickly'' means in amortized expected $O(\operatorname{polylog} n)$ time.
His ``nonvertical ray shooting'' queries permit determining if
the next edge $e_{i+1}$ intersects a supporting plane of the region $R$
described above, and if so, which one.
Then that plane can be deleted, and new planes inserted according
to the new link$(p)$,
as identified above.
Thus the computation of the hulls $H^+$ and $H^-$---and therefore
threadability detection---can be
achieved in $O(n \operatorname{polylog} n)$ time.

\subsection{Butterfly ``bisecting'' planes}
\seclab{3DPlanes}
The equivalent of the butterfly $\butf(p)$ in 3D is a more complicated
region than in 2D, and choosing a plane $P$ through $p$ separating
$H^+$ and $H^-$ (the analog of $L$) is correspondingly more complicated.
As in 2D, we identify $P$ by its normal vector $N$, say, pointing toward $H^+$.
The outward normals to the faces of $H-$ incident to $p$ form a convex
geodesic polygon on the Gaussian sphere, with each node a face normal,
and each geodesic arc corresponding to the dihedral angle along the edge shared
by two adjacent faces. See, e.g., \cite{bls-ctelc-07}. Any point within this geodesic
polygon corresponds to a normal vector whose plane supports $H^-$ at $p$.
Repeating this for $H^+$ yields another geodesic polygon corresponding
to the faces of $H^+$ incident to $p$. Using outward face normals leads to
normals pointing toward $H^-$; reflecting this geodesic polygon through the origin
then orients the normals for $H^+$ and $H^-$ consistently.
See Fig.~\figref{GeodesicArcs}.
Then the butterfly region $\butf(p)$ is determined by the intersection $I$ of these
two geodesic polygons.
\begin{figure}[htbp]
\centering
\includegraphics[width=0.5\linewidth]{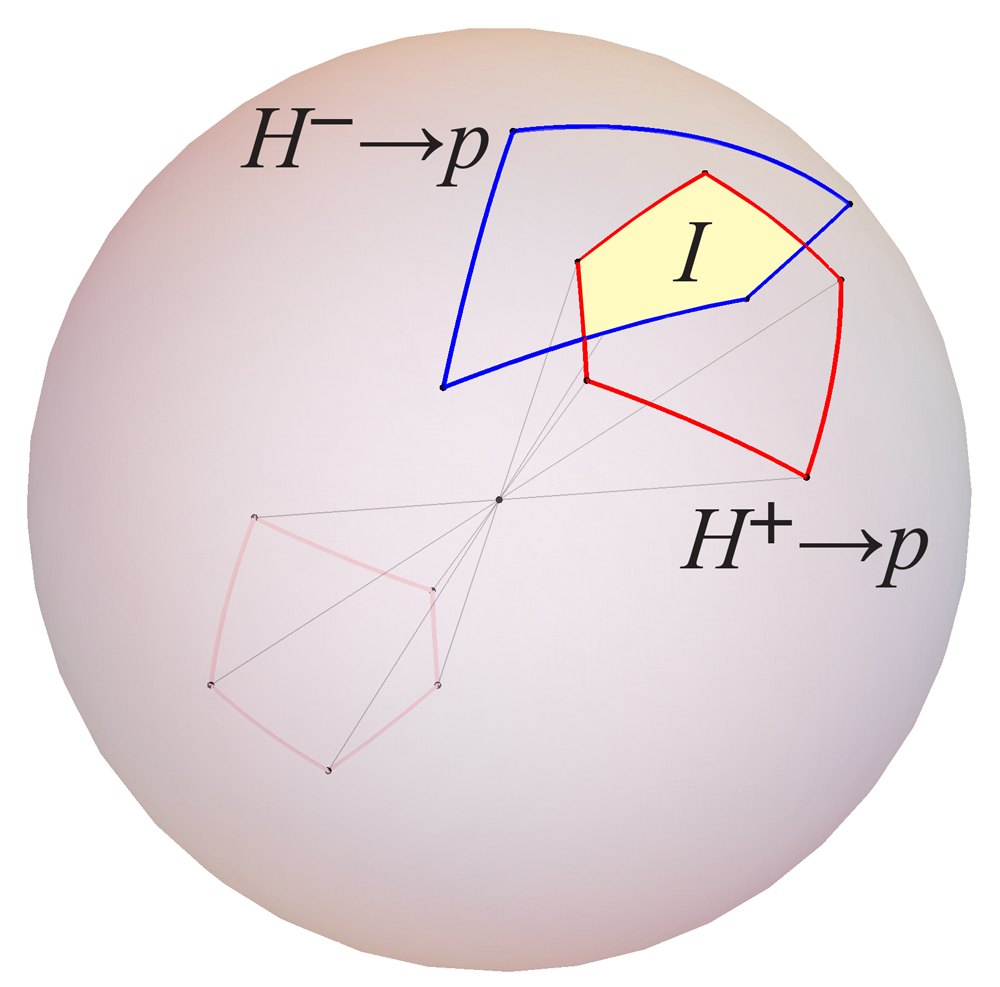}
\caption{Gaussian sphere. The blue polygon represents the faces of $H^-$ incident
to $p$, and the red polygon the faces of $H^+$ incident to $p$. Any point in the
(yellow) intersection $I$ is the normal vector $N$ of a plane $P$ in $\butf(p)$.}
\figlab{GeodesicArcs}
\end{figure}

The equivalent of bisecting $\butf(p)$ in 2D would be choosing the centroid
of the intersection region $I$ on the Gaussian sphere; of course any point in 
the interior of $I$ would suffice.
In 2D we argued that, as $p$ slides along an edge $e$ between combinatorial
changes in either $H^+$ or $H^-$, the rigid rotation reverses direction at most once,
which led to a linear-size description of the rigid motions.
In 3D, even with $p$ on one edge $e$ between combinatorial changes,
it seems that the intersection region $I$ on the Gaussian sphere might
change $\Omega(n)$ times, requiring recalculation of $N \in I$.
This complicates describing the rigid motions in a concise manner.
We leave finding a clean notion of what should constitute
a ``elementary rigid motion'' in 3D to future work,
but we note that the rigid motions
for threading are analytically determined and could be detailed to any precision desired.

\section{Higher Dimensional Generalizations}
\seclab{HigherDim}
There are two natural generalizations to higher dimensions,
but neither seems a fruitful line of future inquiry.
The first retains the curve as a $1$-dimensional object which must pass
through a hole in a hyperplane in $\mathbb{R}^d$. 
\full{The detection algorithm
would again rely on computing the hulls $H^+$ and $H^-$, and is substantially similar
to the analysis in Section~\secref{3D}. We have not pursued this extension.
}

The second generalization replaces the curve with a polygon $P$, which must
pass through a slit in $L$. 
This topic has been explored previously, in two versions.
\conf{We cite~\cite{yap1987move} and~\cite{bose2005generalizing} and
leave further discussion to the full version.
}
\full{A classic motion-planning problem called ``moving a chair through a doorway''
was solved by Chee Yap with an innovative quadratic algorithm~\cite{yap1987move}, which both computed the ``door-width'' of the polygon---the narrowest door through which it could pass---and described the rigid motions necessary to execute the passage.
A second generalization requires the intersection of $P$ and $L$ to always
be a segment; this does not hold for Yap's motions and indeed would
undermine his goals.
Such a polygon is called \emph{sweepable} in~\cite{bose2005generalizing} 
and was explored
as a generalization of monotone polygons.
They provide a quadratic algorithm for detecting sweepability.

Bose et al.~\cite{bose2005generalizing} observed that
sometimes the polygon must ``back up" to pass through the slit,
explicit examples of which were provided in~\cite{jo-nmctd-90}.
As mentioned earlier, this is a phenomenon never needed for threadable curves,
and is perhaps the underlying reason threadability is linear and sweepability seems quadratic.
}

\section{Open Problems}
\seclab{Open}
\begin{enumerate}
\squeezelist
\item In $\mathbb{R}^3$, can finding a plane $P$ separating $H^+$ and $H^-$, 
as sketched in
Section~\secref{3D}, be achieved in $O(n \operatorname{polylog} n)$ time?
In other words, can the intersection $I$ of the two geodesic polygons be
maintained in amortized expected $O(\operatorname{polylog} n)$ time?
\item Is there a natural definition of what constitutes an
``elementary rigid motion'' in $\mathbb{R}^3$, and how many such motions
are needed to thread a polygonal curve of $n$ segments?
\item Can a simple, connected algebraic curve of degree $d$
have $\Omega(d^4)$ bi-tangents?
The $O(d^4)$ bound mentioned in Section~\secref{Algebraic}
is achieved for quartics by disconnected, closed zero-sets~\cite{MObitangents}.
%
\item Define the \emph{minimum clearance} for a threadable curve $C$
as the minimum width region above and below $L$ through which 
points of $C$ pass as it
threads through the hole $o$, with width the dimension parallel to $L$.
If $C$ were a rigid pipe (e.g., a hydraulic tube), it would be necessary to ensure the clearance
regions are empty of other objects to avoid collisions.
Finding the minimum requires more careful selection of $L$ in $\butf(p)$,
rather than just using the bisector as we suggest in Section~\secref{Motions}.
The same question may be asked for $\mathbb{R}^3$.
\item If $C$ is not threadable, what is the shortest slit in $L$ through which $C$ could pass?
Or, in $\mathbb{R}^3$, the smallest radius hole in a plane? 
This is close to Yap's door width~\cite{yap1987move}, and perhaps his algorithm
could be modified to solve the 2D problem.
\end{enumerate}

\paragraph{Acknowledgements.}
We thank Mikkel Abrahamsen
for pointing us to~\cite{bose2005generalizing},
and Anna Lubiw and Joseph Mitchell for helpful discussions.

\bibliographystyle{alpha}
\bibliography{Threadable}
\end{document}